\newtheorem{theorem}{Theorem}
\newaliascnt{lemma}{theorem}
\newtheorem{lemma}[lemma]{Lemma}
\newaliascnt{definition}{theorem}
\newtheorem{definition}[definition]{Definition}
\newaliascnt{conjecture}{theorem}
\newtheorem{conjecture}[conjecture]{Conjecture}
\newcommand*{\order}[1]{\mathcal{O}\left(#1\right)}
\newcommand*{\abs}[1]{\left|#1\right|}
\newcommand*{\hil}{\mathcal{H}}
\newcommand*{\complex}{\mathbb{C}}
\begin{document}

\title{Constraint-oriented biased quantum search for general constrained combinatorial optimization problems}
\author{Sören Wilkening}
\affiliation{Institut f\"ur Theoretische Physik, Leibniz Universit\"at Hannover, Germany}

\begin{abstract}
We present a quantum algorithmic routine that extends the realm of Grover-based heuristics for tackling combinatorial optimization problems with arbitrary efficiently computable objective and constraint functions. 
Building on previously developed quantum methods that were primarily restricted to linear constraints, we generalize the approach to encompass a broader class of problems in discrete domains.
To evaluate the potential of our algorithm, we assume the existence of sufficiently advanced logical quantum hardware.
With this assumption, we demonstrate that our method has the potential to outperform state-of-the-art classical solvers and heuristics in terms of both runtime scaling and solution quality.
The same may be true for more realistic implementations, as the logical quantum algorithm can achieve runtime savings of up to $10^2-10^3$.
\end{abstract}
\maketitle

\section{Introduction}
Grover's search algorithm \cite{Durr1996AQuantumAlgorithmForFindingTheMinimum, Grover1996AFastQuantumMechanicalAlgorithmForDatabaseSearch} is one of the most promising contenders for achieving a quantum advantage over classical algorithms for unstructured database search. 
The concept also applies to combinatorial optimization problems \cite{Nemhauser1988IntegerAndCombinatorialOptimization}, typically NP-hard problems \cite{Karp1975NPHardness}.
There has been some debate about whether Grover's algorithm can deliver a practical quantum advantage or not \cite{Stoudenmire_2024, Aaronson2023blog}.
As shown by Wilkening et al. \cite{wilkening2024quantumalgorithmsolving01, wilkening2025quantumsearchmethodquadratic, ODSQ1}, Grover's quantum search algorithm, when applied to combinatorial optimization problems, can be tailored to specific instances using advanced state-preparation techniques, potentially outperforming general-purpose state-of-the-art classical algorithms, such as Gurobi \cite{gurobi}.
The tackled problems comprised linear and quadratic objective functions and general linear constraints.
The main idea of the approach is to insert simple checks while computing the constraint violation to reduce branching during state preparation for any produced state.
In this work, we generalize the framework introduced in \cite{ODSQ1}, namely the Constraint-oriented biased quantum search (CBQS), to tackle a broader range of optimization problems, potentially any problem from the optimization class NPO \cite{Leeuwen2012TheoryOfComputingSystems}.
This implies that we only tackle problems for which the validity of a given assignment can be computed efficiently (in polynomial time).
We provide a formulation for single-objective optimization problems, but the same ideas can also be applied to multi-objective optimization problems \cite{Deb2014}.
Numerical experiments demonstrate the framework's potential as a competitor against classical and quantum general-purpose state-of-the-art solvers. 
The classical methods chosen are Gurobi and Hexaly \cite{hexaly}, as well as a simulated annealing tool, Simanneal \cite{simanneal}.
Additionally, a simplified comparison is made against quantum Branch-and-Bound (qBnB) \cite{Montanaro_2020,chakrabarti2022universalquantumspeedupbranchandbound,Ambainis_2017}.

We are aware of QAOA implementations addressing pseudo-boolean optimization problems \cite{yue2023localglobaldistributedquantum}, but we disregard them for three reasons. 
Firstly, when simulating a general quantum algorithm, the number of simulateable qubits \cite{Lykov_2023} for high depth circuits is a bottleneck, only allowing for the investigation of tiny instances with $\leq30$ variables (depends on how many ancillary qubits are required). 
Even though there are more advanced simulation techniques based on tensor network methods \cite{dubey2025simulatingquantumcircuitstree} that can simulate more qubits, these have only been tested on extremely shallow circuits and may therefore not be helpful for the considered benchmark instances in this paper.
This would still be an interesting question to investigate, but it is beyond the scope of his work.
We can use specialised implementations to investigate larger instances (up to 3000 variables) of our proposed quantum algorithm.
Secondly, our state preparation unitary can be used to generate an initial state for QAOA, as investigated for the 0-1 Knapsack problem \cite{christiansen2024quantumtreegeneratorimproves}.
The study demonstrated that implementing QAOA with custom state preparation, similar to ours, is superior to plain QAOA algorithms.
We expect the same outcome for the implementation provided in this paper.
For the numerical analysis, a quadratic objective function and quadratic constraints are chosen.
Thirdly, we aim to tackle more general problems than those represented by pseudo-Boolean formulas with an efficient polynomial representation.
For example, product terms (efficiently implementable via multiplication) pose a significant challenge for QAOA, as they can't be easily mapped to an efficient Hamiltonian representation.

\section{Previous work}
An early method to incorporate the structure of combinatorial optimization problems into Grover's search was \textit{Nested quantum search} \cite{Cerf2000Nested!uantumSearchAndStructuredProblems}.
The algorithm aims to evaluate constraint violations on a subset of variables to filter out constraint-violating partial assignments via Grover search.
This is done on multiple disjoint subsets of variables.
The resulting quantum state is then used as an initial state for an outer Grover search to solve the optimization problem.

A recent competitive approach, a Grover-based quantum heuristic, was made by Wilkening et al. \cite{wilkening2024quantumalgorithmsolving01, wilkening2025quantumsearchmethodquadratic}, which tackles the 0-1, quadratic, and multidimensional knapsack problems.
The developed quantum routine aimed to build the initial quantum state iteratively, eliminating every non-feasible solution at each iteration.
Therefore, the state-preparation oracle $\mathcal{G}$ generates a state only containing feasible solutions to the optimization problem. 
The primary limitation of this approach is that it only allows for benign constraint functions, thereby limiting its applicability.
This problem was addressed in \cite{ODSQ1}, where a method was proposed to tackle general linear-constraint integer problems. 
But this method cannot avoid generating non-feasible solutions.
Nevertheless, all the methods were developed for linear constraint functions.
The basic framework of the proposed algorithm uses Grover's algorithm to formulate a quantum heuristic, incorporating problem-specific initial-state preparation and a non-exponential cutoff for oracle application.
The search routines are described by \autoref{alg:QSearch} and \ref{alg:QMaxSearch}, where $S_0 = 2\ket{0}\bra{0} - 1$ and $S_F = (-1)^{F(x)}$.

\begin{algorithm}[t]
\caption{\textbf{QSearch}($\mathcal{G}$, $F$, $M$)}\label{alg:QSearch}
    Set $l = 0$, $m_{tot} = 0$, and let $d$ be any constant such that $1 < d < 2$\;
    Increase $l$ by $1$ and set $m = \lceil d^{l}\rceil$
    Apply $\mathcal{G}$ to the initial state $\ket{0} \ket{c_1}\dots\ket{c_m} \ket{0}$\;
    Choose $j \in [1, m]$ uniformly at random
    $m_{tot} \gets m_{tot} + 2j + 1$\;
    Apply $\left(\mathcal{G}S_0\mathcal{G}^\dagger S_F\right)^{j}$ to $\ket{s} = \mathcal{G} \ket{0} \ket{c_1}\dots\ket{c_m} \ket{0}$\;
    Measure all registers $\rightarrow \ket{x} \ket{\tilde{c}_1} \dots \ket{\tilde{c}_m} \ket{P_{x}}$\;
    \eIf{$F(P_{x}) = 1$ \textbf{and} all $\tilde{c}_k \geq 0$ \textbf{or} $m_{tot} \geq M$}{
        \textbf{return} $x$, $P_{x}$\;
    }{
        go to step 2\;
    }
\end{algorithm}

Although this approach shares some similarities with Nested Quantum Search (NQS), Wilkening et al. \cite{wilkening2024quantumalgorithmsolving01} have demonstrated empirically that this method outperforms NQS in terms of performance.
A reason for this is that filtering out constraint-violating partial assignments using Grover's algorithm may require an exponential number of oracle applications, whereas iterative constraint checking using circuit operations only requires a depth proportional to the number of bits needed to store the constraint information and is therefore unlikely to be exponentially expensive.

Furthermore, another approach is Quantum Branch-and-Bound (qBnB) \cite{Montanaro_2020}, which achieves a Grover speedup over the classical Branch-and-Bound methodology by exploring a Tree of non-constraint-violating nodes with potential objective values better than a given threshold.
The caveat of this approach compared to Nested Quantum Search and our proposed method is that qBnB comprises potentially very expensive oracle calls \cite{huang2021branchboundmixedinteger, nannicini2022fastquantumsubroutinessimplex,ammann2023realisticruntimeanalysisquantum}.
A more detailed description is provided in \autoref{sec:benchmarking}.

\begin{algorithm}[t]
\caption{\textbf{QMaxSearch}($\mathcal{G}$, $M$, $p$, $w$, $c$)}\label{alg:QMaxSearch}
set $T=0$, $x=0$\;
\While{True}{
    $x$, $P_{x}$ = \textbf{QSearch}($\mathcal{G}$, $F(x) = x > T$, $M$)\;
\eIf{$P_{x} > T$}{
    set $T = P_{x}$ and $x_{out} = x$\;
}{
    \textbf{return} $x_{out}$, $T$\;
}
}
\end{algorithm}

\section{Method}

Without loss of generality, we assume an optimization problem with $n$ only binary variables $x_j\in\{0,1\}$ $\forall$ $1\leq j\leq n$. 
Problems with $n$ bounded integer variables $x_j\in\{0, b\}$ could be reduced to a binary problem using $n\log_2(b)$ binary variables.
Furthermore, we consider our problem to contain only integer coefficients within the objective and constraint functions, as scaling up fractional values to integers would yield the same algorithm. 
As we store integers in their binary representation, this ensures the efficiency of the quantum routines.

Grover's algorithm and amplitude amplification \cite{Brassard_2002} are typically defined given an arbitrary function $f\rightarrow \{0,1\}^n\rightarrow \{0,1\}$; for completeness, we include a statement on the potential implementation of arbitrary objective functions.

\begin{theorem}
	\label{theorem:general_objective}
	For any optimization problem with an efficiently computable objective function, there exists a quantum circuit that efficiently computes the respective objective value of any solution contained in a quantum state $\ket{\psi}$.
\end{theorem}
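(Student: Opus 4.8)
The plan is to reduce the statement to the standard fact that any classically efficient computation can be made reversible and hence realized as a unitary quantum circuit with only polynomial overhead. Here ``efficiently computable objective function'' means there is a uniform family of Boolean circuits $\{C_n\}$ of size $\mathrm{poly}(n)$ that, on input $x\in\{0,1\}^n$, outputs the binary representation of the (signed) integer $P(x)$; since the computation runs in polynomial time, $P(x)$ fits in $q=\mathrm{poly}(n)$ bits (a sign bit or two's-complement encoding covers negative values).

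First I would replace every \textsc{and}/\textsc{or}/\textsc{not} gate of $C_n$ by Toffoli and \textsc{not} gates acting on fresh ancilla qubits, obtaining a reversible circuit $R_n$ with $\mathrm{poly}(n)$ gates and $\mathrm{poly}(n)$ ancillas implementing $\ket{x}\ket{0}^{\otimes q}\ket{0}_{\mathrm{anc}}\mapsto\ket{x}\ket{P(x)}\ket{g(x)}_{\mathrm{anc}}$, where $g(x)$ denotes the accumulated garbage. Then I would apply Bennett's uncomputation trick: copy the $q$-bit result into a clean register with CNOTs and run $R_n^{\dagger}$ to erase the garbage, yielding the clean unitary $U_P\colon\ket{x}\ket{0}^{\otimes q}\mapsto\ket{x}\ket{P(x)}$ on $n+q$ qubits (all ancillas returned to $\ket{0}$), still of size $\mathrm{poly}(n)$. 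The elementary building blocks needed to assemble $P$ from the problem's integer coefficients — binary addition, multiplication, and comparison — all admit the well-known $\mathrm{poly}$-size reversible implementations, so nothing beyond bookkeeping is required here.

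Finally, by linearity of $U_P$, applying it to $\ket{\psi}\otimes\ket{0}^{\otimes q}$ with $\ket{\psi}=\sum_x\alpha_x\ket{x}$ produces $\sum_x\alpha_x\ket{x}\ket{P(x)}$, so the objective value of every computational-basis solution appearing in $\ket{\psi}$ is coherently written into the auxiliary register by a single $\mathrm{poly}(n)$-depth circuit. The identical construction applies verbatim to each constraint function $c_k(x)$, which is exactly what the state-preparation oracle $\mathcal{G}$ and the predicate checks in \autoref{alg:QSearch} rely on.

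The main obstacle — indeed essentially the only point that needs care — is garbage management: without Bennett uncomputation the ancilla register remains entangled with $\ket{x}$, which would spoil the interference that Grover's search depends on. One must therefore verify that uncomputation at most doubles the gate count, keeps the ancilla register polynomial, and leaves all work qubits disentangled, so that $U_P$ is a genuine unitary acting as claimed on arbitrary superpositions $\ket{\psi}$.
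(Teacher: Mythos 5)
Your proposal is correct, and at its core it rests on the same fact the paper invokes: an efficiently computable classical function can be lifted to an efficient unitary that writes $P(x)$ into an auxiliary register, and linearity then extends this to every basis state in $\ket{\psi}$. The difference is in how that fact is established. The paper's proof is a two-line appeal to Bernstein--Vazirani (efficient reversible Turing machines can be simulated efficiently by quantum Turing machines), together with the remark that the objective computation is a separate unitary $U^{obj}$ that does not alter the set of basis states prepared by $U^{con}$. You instead give the explicit circuit-level construction: compile the poly-size Boolean circuit into Toffoli/\textsc{not} gates on ancillas, copy out the $q$-bit result, and apply Bennett uncomputation to return the garbage register to $\ket{0}$, yielding a clean $U_P\colon\ket{x}\ket{0}\mapsto\ket{x}\ket{P(x)}$ of polynomial size. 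This is more elementary and self-contained, and it makes explicit a point the paper glosses over entirely: without uncomputing the garbage, the ancillas stay entangled with $\ket{x}$ and the interference needed for the subsequent Grover iterations $(\mathcal{G}S_0\mathcal{G}^\dagger S_F)^j$ would be destroyed. So your argument proves the same statement by the same underlying reduction, but supplies the bookkeeping (garbage management, register sizes, two's-complement encoding) that the paper delegates to the cited references.
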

\begin{proof}
	As the computation of the objective function does not, constructed in \cite{wilkening2024quantumalgorithmsolving01, ODSQ1, wilkening2025quantumsearchmethodquadratic} and this work, affect the set of states contained in $\ket{\psi}$, the state preparation consists of the two unitaries $U^{obj}$ and $U^{con}$, s.t. $\ket{\psi} = U^{con}\ket{\psi_0}$.
	Furthermore, as shown in \cite{Bernstein1997QuantumComplexityTheory}, any reversible efficient Turing machine can be simulated efficiently by a quantum Turing machine. For every efficiently computable objective, there exists a quantum circuit with polynomial depth that computes the objective value of every solution within $\ket{\psi}$.
\end{proof}

To develop a more general framework for the constraint-respecting part of the state preparation routine, we will begin by defining explicit quantum operations that tackle a wide variety of problems.
As the objective and constraint functions map an $n$-dimensional binary input onto a natural (real) number, they are called \textit{pseudo-boolean functions} \cite{Boros2002PseudoBooleanOptimizations}.

\begin{lemma}
    Let $n$ be an integer and $V=\{1, \dots, n\}$. 
    Every pseudo-boolean function $f:\{0,1\}^n\rightarrow \mathbb{R}$ can be represented by a \textit{multi-linear polynomial} of the form
    \begin{align}
    \label{eq:mlpf}
        f(x_1,\dots,x_n) = \sum_{S\subseteq V}p_S \prod_{j\in S}x_j.
    \end{align}
\end{lemma}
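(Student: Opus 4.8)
The plan is to prove existence and uniqueness of the multi-linear representation by an explicit construction, exploiting the fact that $\{0,1\}^n$ is finite and that on it the identity $x_j^2 = x_j$ holds, so any polynomial representation can be reduced to a multi-linear one without changing its values on the Boolean cube. Concretely, I would proceed as follows.

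First, I would count degrees of freedom. There are exactly $2^n$ subsets $S \subseteq V$, hence $2^n$ candidate coefficients $p_S$, and also exactly $2^n$ points in the domain $\{0,1\}^n$. So it suffices to show that the linear map sending the coefficient vector $(p_S)_{S \subseteq V}$ to the value vector $\bigl(f(x)\bigr)_{x \in \{0,1\}^n}$ is a bijection; injectivity and surjectivity are equivalent here since both spaces have the same finite dimension $2^n$. I would then exhibit an explicit inverse, which simultaneously proves surjectivity (existence of the representation for \emph{every} $f$) and gives a formula for the coefficients.

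Second, I would give the explicit coefficient formula via a M\"obius-type inversion on the Boolean lattice. For $x \in \{0,1\}^n$ let $\mathrm{supp}(x) = \{ j : x_j = 1 \}$; then for any $S \subseteq V$ the monomial $\prod_{j \in S} x_j$ equals $1$ if $S \subseteq \mathrm{supp}(x)$ and $0$ otherwise. Evaluating \eqref{eq:mlpf} at the indicator vector $\mathbf{1}_T$ of a set $T$ gives $f(\mathbf{1}_T) = \sum_{S \subseteq T} p_S$. This is a triangular (unitriangular with respect to inclusion) system, so it is invertible by inclusion--exclusion, yielding
\begin{align}
\label{eq:mobius}
    p_S = \sum_{T \subseteq S} (-1)^{\lvert S \rvert - \lvert T \rvert} f(\mathbf{1}_T).
\end{align}
I would verify \eqref{eq:mobius} by substituting back: plugging this $p_S$ into $\sum_{S \subseteq T_0} p_S$ and swapping the order of summation collapses the alternating inner sum to a Kronecker delta, recovering $f(\mathbf{1}_{T_0})$; since $T_0$ was arbitrary and the candidate polynomial agrees with $f$ on all $2^n$ points of the cube, it represents $f$. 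Uniqueness follows from the same triangular system, or from the dimension count: the $2^n$ monomials $\prod_{j \in S} x_j$, viewed as functions on $\{0,1\}^n$, are linearly independent (again by the triangular evaluation argument), hence form a basis of the $2^n$-dimensional space of all pseudo-boolean functions.

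The only mild obstacle is bookkeeping the double sum in the verification step and being careful that the alternating sum $\sum_{T_0 \setminus \text{(fixed)} } (-1)^{\cdots}$ vanishes unless the relevant set is empty — this is the standard fact that $\sum_{R \subseteq A}(-1)^{\lvert R \rvert} = 0$ for $A \neq \emptyset$ and $=1$ for $A = \emptyset$; everything else is routine. An even shorter alternative, which I would mention as a remark, is the interpolation identity $f(x) = \sum_{T \subseteq V} f(\mathbf{1}_T) \prod_{j \in T} x_j \prod_{j \notin T} (1 - x_j)$, valid because the product of "selector" factors is the indicator of $\{x = \mathbf{1}_T\}$ on the cube; expanding the factors $(1-x_j)$ and using $x_j^2 = x_j$ to collapse repeated variables yields a multi-linear polynomial, and matching coefficients reproduces \eqref{eq:mobius}. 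This makes existence essentially immediate and reduces the whole lemma to the linear-independence (uniqueness) half, which is the triangular-system argument above.
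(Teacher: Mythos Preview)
Your proof is correct. The paper, however, does not give its own argument at all: it simply cites Theorem~2 of Boros and Hammer's survey on pseudo-Boolean optimization and stops there. Your route is therefore genuinely different in the sense of being explicit and self-contained: you establish both existence and uniqueness via M\"obius inversion on the Boolean lattice (equivalently, the interpolation identity with the selector products $\prod_{j\in T}x_j\prod_{j\notin T}(1-x_j)$), and you obtain the closed-form coefficient formula $p_S = \sum_{T \subseteq S} (-1)^{\lvert S\rvert-\lvert T\rvert} f(\mathbf{1}_T)$ as a by-product. The paper's approach buys brevity at the cost of deferring the reader to an external source; yours buys transparency and an explicit inverse map, which is arguably more useful in an algorithmic context where one might actually want to compute the $p_S$.
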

\begin{proof}
    See proof of theorem 2 in \cite{Boros2002PseudoBooleanOptimizations}.
\end{proof}

\autoref{eq:mlpf} is very useful for an algorithm description to tackle arbitrary objective and constraint functions, but it might lead to an inefficient formulation of specific optimization problems.
An example of this would be the product Knapsack problem \cite{pferschy2021productknapsackproblemapproximation}.
Decomposing its objective function into a multi-linear polynomial form would result in a function with $\mathcal{O}(2^n)$ terms, making it inefficient for evaluating any quantum or classical algorithm.
Therefore, w.l.o.g., we extend \autoref{eq:mlpf} by a term incorporating an efficient representation of product terms.
Assuming $f$ contains $Z$ product functions we get
\begin{widetext}
\begin{align}
\label{eq:general_function}
    f(x_1,\dots,x_n) = \sum_{S\subseteq V}p_S \prod_{j\in S}x_j 
    + \sum_{i = 1}^Z \alpha_i \prod_{S\subseteq V} \mathrm{pow}\left(p'_{i,S}, \prod_{j\in S'}x_j\right).
\end{align}
\end{widetext}
Here $\alpha_i \in \{-1, 0, 1\}$.
The constant could be merged into $p'_{i,S}$, but is used to indicate the number of product terms present in the function.
The added term is a compactification of a specific part of the multi-linear polynomial form, resulting in a more efficiently implementable expression.

Here we assume $p$ and $p'$ to be a tensors with 
\begin{align}
    &q=\left|\{S:S\subseteq V, p_S\neq 0\}\right| \text{ and } \\
    &q' = \left|\{(i,S):S\subseteq V, p'_{i, S}\neq 0, 1\leq i \leq Z\}\right|
\end{align}
non zero elements of rank $r=\max\{\left|S\right|:S\subseteq V, p_S\neq 0\}$ and $r'=\max\{\left|S\right| + 1:S\subseteq V, p'_{i,S}\neq 0, 1\leq i\leq Z\}$ respectively, consisting of only constant integer values.
More general functions $p(S)$ could be considered, but constant coefficients are sufficient for this paper.

We can also use the formulation of pseudo-boolean functions for arbitrary constraint functions, leading to the most general formulation
\begin{align}
\label{eq:MostGeneralConstraint}
\begin{aligned}
    &\sum_{S\subseteq V} w_{S,k} \prod_{j\in S} x_j + \\ 
    &\sum_{i = 1}^Z \alpha_{i,k} \prod_{S\subseteq V} \mathrm{pow}\left(w'_{i,S,k}, \prod_{j\in S'}x_j\right) \leq c_k
\end{aligned}
\end{align}

\subsection{The sum constraint}
The sum and product expressions require different approaches, and for simplicity, we develop quantum routines for both cases separately.
For now, we therefore focus on constraints of the form 
\begin{align}
    \label{eq:sum_constraint}
    \sum_{S\subseteq V} w_{S,k} \prod_{j\in S} x_j \leq c_k.
\end{align}
As done in \cite{ODSQ1}, using a shift $P_k = \sum_{S\subseteq V:w_{S,k} < 0} \abs{w_{S,k}}$ this can be reformulated into 
\begin{align}
\label{eq:constraints_adusted}
\begin{aligned}
    \sum_{S\subseteq V} &\left|w_{S,k}\right| y_{S,k} \leq c_k + P_k \text{, where } \\
    &y_{S,k} = 
    \begin{cases}
        \prod_{j\in S} x_j &\text{ if } w_{S,k} \geq 0\\
        1- \prod_{j\in S} x_j &\text { else }
    \end{cases}
\end{aligned}
\end{align}
As the number of variables increases, so extra constraints are introduced stating the relationship of these variables, f.e. that $y_{S,k} = y_{S,k'}$ when $sign(w_{S,k}) = sign(w_{S,k'})$.
Nevertheless, in the remainder of the work, we are not interested in these additional constraints, as they are redundant when assigning the original variables $x_j$.
We only require \autoref{eq:constraints_adusted} as a description for the quantum routine developed in this work.
Before we build our quantum algorithm, we require some definitions.
\begin{definition}
    (Closed product) We call the product $\prod_{j\in S}x_j$ closed if every variable $x_j$ is assigned or at least one of the $x_j$ is assigned to 0.
\end{definition}
\begin{definition}
    We call a variable $y_{S,k}$ closed if its product $\prod_{j\in S}x_j$ is closed. 
    More formally, $y_{S,k}$ is closed given a partial assignment with $t$ assigned variables, if $t \geq \max(S)$.
\end{definition}

\begin{definition}
\label{def:potential_satisfyable}
    We call a partial assignment $(x_1,\dots,x_t)$ with $t$ assigned variables \textit{potentially satisfiable}, if every constraint of \autoref{eq:MostGeneralConstraint} individually could be satisfied.
\end{definition}
As in previous work \cite{wilkening2024quantumalgorithmsolving01, wilkening2025quantumsearchmethodquadratic, ODSQ1} to verify the potential satisfiability of a given partial assignment, we disregard the contribution of non-closed variables by assigning them to $y_{S,k} = 0$. 
As mentioned, we disregard the variables' relationship constraints, as this is just an implied and not a fixed assignment.
Notably, a potentially satisfiable assignment does not necessarily lead to a satisfying solution.

\begin{lemma}
\label{lemma:oracle}
    Let $\kappa$, $n$ and $t$ be integers, s.t. $t\leq n$. 
    Further, let $w$ be a tensor with $\order{poly(n)}$ non-zero elements.
    Given a quantum state containing a potentially satisfiable assignment $(x_1,\dots,x_t)$.
    There exists a quantum oracle that, if at least one of the two successors $(x_1,\dots,x_t,0)$ and $(x_1,\dots,x_t,1)$ is potentially satisfiable, generates only the potentially satisfiable assignments, and otherwise does nothing.
    This oracle is implementable with depth of $\order{poly(n)m(\log_{2}(n) + \log_{2}(\kappa))}$ using $\order{n + m\kappa}$ qubits.
\end{lemma}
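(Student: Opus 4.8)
The plan is to construct the oracle explicitly as a composition of three reversible sub-routines — a \emph{compute} phase, a \emph{branch-and-check} phase, and an \emph{uncompute} phase — and then bound the depth and qubit count of each. First I would set up the Hilbert space: a register $\ket{x_1}\cdots\ket{x_t}$ holding the partial assignment, for each constraint $k$ a $\kappa$-qubit register $\ket{s_k}$ holding the running partial sum $\sum_{S:\,\max(S)\le t}|w_{S,k}|\,y_{S,k}$ of \autoref{eq:constraints_adusted} restricted to the closed variables, and $\order{n}$ work qubits plus the new variable qubit $\ket{x_{t+1}}$. The hypothesis that $(x_1,\dots,x_t)$ is potentially satisfiable means every $s_k$ already satisfies $s_k\le c_k+P_k$; this invariant is what the oracle must preserve while extending the assignment by one bit.

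The core step is the branch-and-check phase. Putting $\ket{x_{t+1}}$ into the state $\tfrac1{\sqrt2}(\ket0+\ket1)$ creates the two successors in superposition; then, controlled on the value of $x_{t+1}$, I update each $s_k$ by adding the contributions of exactly those variables $y_{S,k}$ that become closed at step $t+1$, i.e.\ those with $\max(S)=t+1$. Each such update is: compute $\prod_{j\in S}x_j$ into an ancilla via an $|S|$-controlled AND, flip it if $w_{S,k}<0$ to obtain $y_{S,k}$, and conditionally add the classical constant $|w_{S,k}|$ to $s_k$ with a $\kappa$-qubit adder. Here I would invoke the logarithmic-depth carry-lookahead adder of \cite{draper2004logarithmicdepthquantumcarrylookaheadadder} for the $\order{\log_2\kappa}$ addition cost and the standard observation (as in \cite{wilkening2024quantumalgorithmsolving01}) that making an operation controlled on $\le|S|\le n$ qubits costs an extra $\order{\log_2 n}$ factor. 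Since $w$ has only $\order{\mathrm{poly}(n)}$ nonzero entries, summed over all $k$ there are $\order{\mathrm{poly}(n)}$ such updates, and with $m$ parallel copies of the constraint registers (matching the notation of \autoref{alg:QSearch}) the total depth of this phase is $\order{\mathrm{poly}(n)\,m(\log_2 n+\log_2\kappa)}$, using $\order{n+m\kappa}$ qubits. After the updates, a comparator checks for each $k$ whether $s_k\le c_k+P_k$ and writes the AND of these outcomes into a flag qubit $f$; this records which of the two branches (or both) is potentially satisfiable.

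The subtle part — and the step I expect to be the main obstacle — is handling the case distinction cleanly and reversibly: if \emph{both} successors are potentially satisfiable we want to keep the uniform superposition over $x_{t+1}$; if \emph{exactly one} is, we want to collapse deterministically onto that branch; and if \emph{neither} is, the lemma requires the oracle to act as the identity on the incoming state (in particular, $\ket{x_{t+1}}$ must be disentangled and returned to, say, $\ket0$, and all ancillas restored). The trick is that the flag $f$ evaluated separately on the $x_{t+1}=0$ and $x_{t+1}=1$ conjunctions gives a two-bit ``feasibility pattern'' $(f_0,f_1)$, and a short controlled rotation on $\ket{x_{t+1}}$ conditioned on this pattern performs exactly the required branch selection — this is the standard ``conditional branch'' primitive from the state-preparation oracles of \cite{ODSQ1, wilkening2024quantumalgorithmsolving01}, now applied per-constraint rather than for a single linear inequality. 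One must argue that when $(f_0,f_1)=(0,0)$ the rotation angle is chosen so that $\ket{x_{t+1}}\mapsto\ket0$ and the subsequent uncompute of the partial sums restores the input exactly; the commuting structure of the classical additions makes this uncomputation routine. Finally I would run the compute phase in reverse to clean the AND-ancillas and comparator outputs, leaving only the (possibly restricted) superposition over $(x_1,\dots,x_{t+1})$ together with the correctly updated running sums, which re-establishes the potential-satisfiability invariant and completes the construction; collecting the depth and width bounds from the three phases gives the claimed $\order{\mathrm{poly}(n)\,m(\log_2 n+\log_2\kappa)}$ depth and $\order{n+m\kappa}$ qubits.
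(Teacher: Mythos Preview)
Your overall architecture (compute/branch/uncompute, log-depth adders, multi-controlled Toffolis) is right, and the resource bounds follow. But the branch-and-check phase as you describe it does not work. You first apply a Hadamard to $\ket{x_{t+1}}$ and then, controlled on $x_{t+1}$, update the partial sums and compute a single flag $f$. This produces an entangled state of the form $\tfrac{1}{\sqrt 2}\bigl(\ket{0}\ket{s^{(0)}}\ket{f_0}+\ket{1}\ket{s^{(1)}}\ket{f_1}\bigr)$: the two feasibility bits live in different branches of the superposition, and there is no unitary that extracts both into an unentangled register $\ket{f_0}\ket{f_1}$ to serve as controls for a subsequent rotation on $\ket{x_{t+1}}$. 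The sentence ``the flag $f$ evaluated separately on the $x_{t+1}=0$ and $x_{t+1}=1$ conjunctions gives a two-bit feasibility pattern'' is exactly where the argument breaks; you correctly flagged this as the main obstacle, but the resolution you sketch is not realizable.

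The paper fixes this by reversing the order: \emph{before} touching $\ket{x_{t+1}}$ it computes, for every $k$, two separate quantities $Q^{+}_{k,t+1}$ (the increment to constraint $k$ if $x_{t+1}\!=\!1$) and $Q^{-}_{k,t+1}$ (the increment if $x_{t+1}\!=\!0$), compares each against the remaining capacity $\tilde c_k$, and ANDs over $k$ to obtain two genuinely unentangled flags $b^{+}_{t+1},b^{-}_{t+1}$. Only then is $\ket{x_{t+1}}$ acted on by $R_y$, $X$, or the identity according to the four cases, after which the capacities are updated by controlled subtractions and the $Q^{\pm}$ are uncomputed. This ordering also forces an asymmetric treatment of positive versus negative $w_{S,k}$ that your uniform ``$\max(S)=t+1$'' rule misses: because $y_{S,k}=1-\prod_{j\in S}x_j$ when $w_{S,k}<0$, setting $x_{t+1}=0$ closes \emph{every} product with $t+1\in S$ (not only those with $\max(S)=t+1$), so $Q^{-}_{k,t+1}$ must sum over all such terms. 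Swapping your Hadamard-then-check for this check-then-rotate pattern, with the $Q^{+}/Q^{-}$ split, repairs the proof.
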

\begin{proof}
    The remaining potential of a given assignment is defined as
    \begin{align}
        \widetilde{c}_k = c_k + P_k - \sum_{S\subseteq V: y_{S,k}\text{ is closed}} \abs{w_{S,k}} y_{S,k}.
    \end{align}
    Assuming a quantum state $\ket{\psi}\in \mathbb{C}^{n}\otimes \bigotimes_{k=1}^m\mathbb{C}^{\kappa}$ containing the potentially satisfiable assignment $(x_1,\dots,x_t)$. 
    W.l.o.g., we consider the state to be
    \begin{align}
        \ket{\psi} = \ket{x_1}\cdots\ket{x_t}\ket{0}^{n-t}\ket{\widetilde{c}_1}\cdots\ket{\widetilde{c}_m}.
    \end{align}
    The proof works the same way when we assume the state is in a superposition. 
    As all the terms with closed variables $y_{S,k}$ are already considered within $\widetilde{c}_k$, we need to evaluate the potential cost of assigning $x_{t+1}$.
    As the constraints act differently for positive and negative coefficients, we evaluate values
    \begin{align}
        \label{eq:Q_plus}
        Q^+_{k,t+1} &= \sum_{
            \substack{S\subseteq V\\\max(S) =  t+1,w_{S,k}\geq0}
        } \abs{w_{S,k}} \prod_{j\in S/\{t+1\}}x_j,\\
        \label{eq:Q_minus}
        Q^-_{k,t+1} &= \sum_{
            S\subseteq V: w_{S,k}<0,t+1\in S
        } \abs{w_{S,k}} \prod_{j<t+1\in S}x_j
    \end{align}
    The sum considers only terms in which $S$ contains $t+1$.
    The sum of terms with positive coefficients only requires all the terms where $t+1$ is the greatest item index, while for the sum of terms with negative coefficients, all terms containing $x_{t+1}$ have to be considered.
    Based on \autoref{eq:Q_plus} and \ref{eq:Q_minus} we can formulate four boolean expressions
    \begin{align}
    \label{eq:boolean}
        b^\pm_{t+1} = \bigwedge_{k=1}^m \left(Q^\pm_{k,t+1}\leq \widetilde{c}_k\right).
    \end{align}
    $b^\pm_{t+1}$ determine wether $x_{t+1}$ can be assigned to $0$ or (and) $1$ and therefore identifies potentially satisfiable assignments.
    As we aim to generate at least all the satisfying assignments to the original optimization problem, we define an oracle with four possible actions:
    \begin{enumerate}
        \item if $b^+_{t+1} \land b^-_{t+1}$: apply $R_y$ to $\ket{0}_{t+1}$
        \item if $b^+_{t+1} \land \lnot b^-_{t+1}$: apply $X$ to $\ket{0}_{t+1}$
        \item if $\lnot b^+_{t+1} \land b^-_{t+1}$: do nothing
        \item if $\lnot b^+_{t+1} \land \lnot b^-_{t+1}$: do nothing
    \end{enumerate}
    After executing the potential assignments, the remaining capacities have to be adjusted properly, resulting in the desired unitary
    \begin{align}
    \begin{aligned}
        U_{t+1} = &\prod_{
            k=1
        }^m \Lambda^{x_{t+1}}(SUB(C_k,Q^+_{k,t+1})) \times\\
        &\prod_{
            k=1
        }^m \Lambda^{\lnot x_{t+1}}(SUB(C_k,Q^-_{k,t+1})) \times \\
        &\Lambda^{b^+_{t+1}\land b^-_{t+1}} (R_{y,t+1}(\theta)) \Lambda^{b^+_{t+1}\land \lnot b^-_{t+1}} (X)
    \end{aligned}
    \end{align}
    Afterwards $Q^{\pm}_{{k,t+1}}$ have to be uncomputed.
    In the worst case every positive/negative coefficient $w_{S,k}$ has to be considered, which results in $\order{\sum_{k=1}^{m}q_{k}} = \order{poly(n)m}$ adding operations, that are controlled by $\order{r_{k}}$ qubits. 
    The depth of a single controlled adding operation is given by $\order{\log_{2}(r_{k}) + \log_{2}(\kappa)}$.
    The integer comparison to compute $b^{\pm}_{{t+1}}$ just adds a small overhead of $\order{m\log_{2}(\kappa)}$.
\end{proof}

An essential fact of the algorithm presented in \autoref{lemma:oracle} is that whenever the oracle has a potentially satisfying state as an input, where the two successors are not satisfiable, the initial state cannot be undone and therefore remains as an unfeasible state. 
Therefore, the quantum procedure, especially when applied repeatedly to assign multiple variables, cannot guarantee a quantum state with only feasible solutions to \autoref{eq:sum_constraint}.
Still, as observed in practice, the quantum state will generate fewer states than the uniform superposition covering every possibility.

\begin{theorem}
\label{theorem:constraint}
    Let $\kappa$, $n$ be integers and $w$ be a tensor with $\order{poly(n)}$ non-zero elements.
    There exists a quantum algorithm that iteratively generates at least all feasible solutions to \autoref{eq:sum_constraint}, but at every iteration, it only produces infeasibilities if the two successors of a potentially satisfiable solution are not satisfiable.
    The algorithm runs in time  $\order{poly(n)m(\log_{2}(n) + \log_{2}(\kappa))}$ using $\order{n + m\kappa}$ qubits.
\end{theorem}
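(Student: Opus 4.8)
The plan is to obtain the desired algorithm simply by composing the single-step oracle of \autoref{lemma:oracle} over all $n$ positions, and then to track the two required invariants — preservation of every feasible solution, and controlled creation of infeasibilities — by induction on the position index.

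First I would fix the working Hilbert space $\mathbb{C}^{2^n}\otimes\bigotimes_{k=1}^m\mathbb{C}^{2^\kappa}$, i.e. one qubit per variable and one $\kappa$-qubit register per constraint holding the remaining potential $\widetilde c_k$, together with the $\order{m\kappa}$ ancillas used transiently inside \autoref{lemma:oracle} to evaluate and then uncompute the quantities $Q^\pm_{k,t+1}$. The routine starts from $\ket{0}^{n}\ket{c_1+P_1}\cdots\ket{c_m+P_m}$, which encodes the empty partial assignment $t=0$ with $\widetilde c_k=c_k+P_k$; if this state is not potentially satisfiable the feasible set of \autoref{eq:sum_constraint} is empty and the algorithm is defined to output nothing, so we may assume it is. The algorithm then applies the oracles $U_1,U_2,\dots,U_n$ of \autoref{lemma:oracle} in order, where $U_{t+1}$ acts on qubit $t+1$ and the potential registers, followed by the uncomputation of the ancillas.

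The correctness proof is an induction on $t$ whose hypothesis is: after $U_1,\dots,U_t$ the state contains, as a basis term, the length-$t$ prefix of every feasible solution $x^\ast$ of \autoref{eq:sum_constraint}, and that prefix is potentially satisfiable. The base case $t=0$ is the assumption above. For the step, the key monotonicity fact is that the partial left-hand side $\sum_{S:\,y_{S,k}\text{ closed}}|w_{S,k}|\,y_{S,k}$ is a lower bound on the full left-hand side of \autoref{eq:constraints_adusted}: every term that is closed at time $t$ — whether coming from a positive coefficient (where $y_{S,k}=\prod_{j\in S}x_j\ge0$) or from the shifted $y_{S,k}=1-\prod_{j\in S}x_j\ge0$ of a negative one — is nonnegative. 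Hence $\widetilde c_k\ge0$ holds all along the prefix chain of $x^\ast$, and, reading off the definition of $Q^\pm_{k,t+1}$ in \autoref{eq:Q_plus}–\ref{eq:Q_minus}, the sign-appropriate local check among $b^+_{t+1}$ and $b^-_{t+1}$ of \autoref{eq:boolean} is passed by the value $x^\ast_{t+1}$: if $x^\ast_{t+1}=1$ the feasibility of $x^\ast$ forces $Q^+_{k,t+1}\le\widetilde c_k$ for all $k$, and symmetrically for $x^\ast_{t+1}=0$ with $Q^-_{k,t+1}$. Therefore the successor $(x_1,\dots,x_t,x^\ast_{t+1})$ is among the potentially satisfiable ones, \autoref{lemma:oracle} guarantees $U_{t+1}$ generates it, and it is again a prefix of $x^\ast$ — closing the induction; taking $t=n$ shows the final state contains every feasible solution. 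The ``controlled infeasibility'' claim is then immediate from \autoref{lemma:oracle}: at step $t+1$ the oracle leaves a branch unextended (the two ``do nothing'' cases) — and hence may strand an infeasible dead end — only when the input prefix was potentially satisfiable but neither of its two successors is, which is exactly the permitted situation; infeasibilities already present from earlier steps are simply carried along unchanged by the remaining $U_{\cdot}$. Finally, there are exactly $n$ sequential invocations of the oracle of \autoref{lemma:oracle}, each of depth $\order{poly(n)\,m(\log_2(n)+\log_2(\kappa))}$, so the total depth — and gate count — is $n\cdot\order{poly(n)\,m(\log_2(n)+\log_2(\kappa))}=\order{poly(n)\,m(\log_2(n)+\log_2(\kappa))}$, the loading of the classical integers $c_k+P_k$ being negligible, and since the ancillas are uncomputed after each $U_{t+1}$ the whole circuit reuses the same registers, for $\order{n+m\kappa}$ qubits in total.

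I expect the main obstacle to be exactly the reconciliation of ``potentially satisfiable'' with ``prefix of a feasible solution'': establishing the monotonicity of the partial constraint value and verifying that the sign-dependent tests $b^\pm_{t+1}$ never reject the continuation of a genuinely feasible assignment, while respecting the asymmetry that only sets $S$ with $\max(S)=t+1$ enter $Q^+_{k,t+1}$ whereas all sets $S\ni t+1$ enter $Q^-_{k,t+1}$. A secondary point worth stating explicitly is that potential satisfiability is necessary but not sufficient for feasibility, so the output state will in general still contain some infeasible assignments; the theorem only controls where and when infeasibilities are introduced, not their absence.
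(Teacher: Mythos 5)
Your proposal is correct and follows essentially the same route as the paper: initialize $\ket{0}^{n}\ket{c_1+P_1}\cdots\ket{c_m+P_m}$, apply the oracle of \autoref{lemma:oracle} once per variable, and charge $n$ times its depth (absorbed into $poly(n)$) with the same $\order{n+m\kappa}$ qubit count. Your explicit induction — using the nonnegativity of closed terms to show that every prefix of a feasible solution stays potentially satisfiable and passes the $b^{\pm}_{t+1}$ checks — is a detail the paper's proof leaves implicit, but it is the intended justification rather than a different argument.
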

\begin{proof}
	Let $\kappa = \order{\max_{k}\{\log_{2}(c_{k} + P_{k})\}}$ be the qubit number to store all integers accurately. 
	For the algorithm we define a Hilbert space $\hil=\mathbb{C}^{n}\otimes \bigotimes_{k=1}^m\mathbb{C}^{\kappa}$, which we initialize with the quantum state
	\begin{align}
		\ket{\psi_{0}} = \ket{0}^{n} \ket{c_{1} + P_{1}}\cdots\ket{c_{m} + P_{m}}.
	\end{align}
	We can see $\ket{\psi}$ as the most satisfactory partial assignment because it is the least restricted one.
	As $c_k + P_k - \sum_{S\subseteq V: y_{S,k}\text{ is closed}} \abs{w_{S,k}} y_{S,k}$ determines the remaining potential to satisfy a given partial assignments, applying the oracle defined in \autoref{lemma:oracle} gives the quantum state
	\begin{align}
		\ket{\psi} = U_{n}\dots U_{1}\ket{\psi_{0}}.
	\end{align}
	$\ket{\psi}$ will contain at least all the satisfying states since $U_{t}$, as shown in \autoref{lemma:oracle} will at every level generate the set of potentially satisfiable partial assignments.
	The cost of the algorithm is determined by applying $U_{t}$ $n$ times, resulting in the runtime $\order{poly(n)m(\log_{2}(n) + \log_{2}(\kappa))}$.
\end{proof}

This section already provides a framework to tackle arbitrary constraint functions with an efficient polynomial expression. 
However, as discussed, product expressions do not have this efficient polynomial form; nevertheless, we are still able to describe a poly-depth quantum circuit for state preparation.

\subsection{The product constraint} 
The strategy to cut non-satisfiable parts from the superposition over all states is to evaluate the smallest possible evaluated constraint function to check whether there could be at least one assignment satisfying the constraints.
As this is again an NP-hard problem, providing a simple lower bound suffices. 
The better the lower bound, the higher the amplitudes of the generated solution.
It becomes, therefore, a meta-optimization problem weighing the cost of tighter lower bounds against the number of Grover iterations required to find a (or the optimal) solution.

When we are concerned with the product terms of \autoref{eq:MostGeneralConstraint}, we look at constraints of the form
\begin{align}
\label{eq:product_constraint}
    \sum_{i = 1}^Z \alpha_{i,k}\prod_{S\subseteq V} \mathrm{pow}\left(w'_{i,S,k}, \prod_{j\in S}x_j\right) \leq c_k.
\end{align}

In contrast to the sum expressions, some dependencies of already fixed assignments must be considered to compute the smallest possible value of a constraint function. 
If, for example, the previous assignments evaluated the constraint to a negative value, the remaining terms in the product should be increased to their maximum possible value.
To define certain rules, we need some clarification.
\begin{definition}
    We call the value $w'_{i,S,k}\neq 0$ of the term $\mathrm{pow}\left(w'_{i,S,k}, \prod_{j\in S}x_j \right)$, given a partial assignment $x$, \textit{remaining}, if none of the variables $x_j$ are assigned $0$ or not all variables $x_j$ for all $j\in S$ are assigned.
\end{definition}

\begin{algorithm}[t]
    \caption{LB($i, k, x_{par}$)}
    \label{alg:lower_bound}
    \If{all remaining $w'_{i,S,k}>0$ \textbf{and} $g_{i,k}(x_{par})<0$}{
        \Return product of all remaining $w'_{i,S,k}$\;
    }
    \If{all remaining $w'_{i,S,k}>0$ \textbf{and} $g_{i,k}(x_{par})>0$}{
        \Return 1\;
    }
    \If{some remaining $w'_{i,S,k}<0$ \textbf{and} $g_{i,k}(x_{par})<0$ \textbf{and} \#(remaining $w'_{i,S,k}<0$) is even}{
        \Return product of all remaining $w'_{i,S,k}$\;
    }
    \If{some remaining $w'_{i,S,k}<0$ \textbf{and} $g_{i,k}(x_{par})<0$ \textbf{and} \#(remaining $w'_{i,S,k}<0$) is odd}{
        \Return product of all remaining $w'_{i,S,k}$  except largest $w'_{i,S,k}<0$\;
    }
    \If{some remaining $w'_{i,S,k}<0$ \textbf{and} $g_{i,k}(x_{par})>0$ \textbf{and} \#(remaining $w'_{i,S,k}<0$) is even}{
        \Return product of all remaining $w'_{i,S,k}$\;
    }
    \If{some remaining $w'_{i,S,k}<0$ \textbf{and} $g_{i,k}(x_{par})>0$ \textbf{and} \#(remaining $w'_{i,S,k}<0$) is odd}{
        \Return product of all remaining $w'_{i,S,k}$  except largest $w'_{i,S,k}<0$\;
	}

\end{algorithm}

Here two implementations are possible for \autoref{alg:lower_bound}.
Either all the classically known remaining terms are considered, taking only into account wether ofr not all variables are assigned, or incorporating the knowledge of the assignments into the quantum routine.
Method one results in a quantum circuit of lower depth, while it will provide worse lower bounds resulting in a greater number of Grover iterations.
This tradeoff would be interesting to investigate, yet it beyond the scope of this paper.

We also have to define a function that evaluates the product functions based on a given partial assignment:
\begin{align}
\begin{aligned}
    &g_{i,k}(x_{{par}}) = \\ 
    &\prod_{S\subseteq V:w'_{i,S,k} \text{ not remaining}} \mathrm{pow}\left( w'_{i,S,k}, \prod_{j\in S}x_{par, j} \right)
\end{aligned}
\end{align}

We can build an algorithm to compute a simple lower bound on the minimal achievable value of \autoref{eq:product_constraint} as described in \autoref{alg:lower_bound}.

We will apply the exact definition of partial satisfiability given by \autoref{def:potential_satisfyable} with respect to the lower bound heuristic given by \autoref{alg:lower_bound}.

\begin{definition}
	\label{def:product_constraint}
	We call a partial assignment $x_{par}$ concerning product constraints potentially satisfyable, if for some lower bound algorithm $LB$, we are given
	\begin{align}
		\label{eq:prod_lower_bound}
		\sum_{i=1}^{Z} \alpha_{i,k} g_{i,k}(x_{par}) LB(i,k, x_{par}) \leq c_{k}
	\end{align}
\end{definition}

Using this definition, we can define an oracle with the same purpose as defined by \autoref{lemma:oracle}.
\begin{lemma}
\label{lemma:prod_constr}
	Let $\kappa$, $n$ and $t$ be integers, s.t. $t\leq n$ and $V=\{1,\dots,n\}$. 
	Further, let $w'$ be tensor with $\order{poly(n)}$ non-zero elements $\in\mathbb{N}/\{0\}$.
	Given a quantum state containing a potentially satisfiable assignment $(x_1,\dots,x_t)$.
	There exists a quantum oracle that, if at least one of the two successors $(x_1,\dots,x_t,0)$ and $(x_1,\dots,x_t,1)$ is potentially satisfiable, generates all and only the potentially satisfiable assignments, and otherwise does nothing.
	This oracle is implementable with a depth of
	\begin{align}
		\order{poly(n)mZ(\log_{2}(n) + \log^2_2(\kappa))}
	\end{align}
	using $\order{n + poly(n)mZ\kappa}$ qubits.
\end{lemma}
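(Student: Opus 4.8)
The plan is to mirror the structure of the proof of \autoref{lemma:oracle}, replacing the sum-constraint bookkeeping with the product-constraint bookkeeping built from \autoref{alg:lower_bound} and \autoref{def:product_constraint}. First I would fix a Hilbert space $\hil = \mathbb{C}^{n}\otimes \bigotimes_{k=1}^m \mathbb{C}^{\kappa}$ augmented by an ancilla register large enough to hold, for each $k$ and each $i\in\{1,\dots,Z\}$, the partially evaluated product $g_{i,k}(x_{par})$ and the lower bound $LB(i,k,x_{par})$; this is where the extra $\order{poly(n)mZ\kappa}$ qubit count comes from, since each of the $mZ$ product terms needs its own $\order{\kappa}$-qubit working register and the multiplications may need a constant-factor blow-up of scratch space. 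W.l.o.g. I would take the input state to be a computational basis state $\ket{x_1}\cdots\ket{x_t}\ket{0}^{n-t}\ket{\widetilde{c}_1}\cdots\ket{\widetilde{c}_m}$ encoding a potentially satisfiable assignment, noting (as in \autoref{lemma:oracle}) that linearity extends the argument to superpositions.

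Next I would describe the oracle for assigning $x_{t+1}$ in three conceptual stages. Stage one: for each $k$ and each $i$, coherently compute the candidate value of \autoref{eq:prod_lower_bound} for the two tentative successors $(x_1,\dots,x_t,0)$ and $(x_1,\dots,x_t,1)$. Concretely this means updating $g_{i,k}$ by the factors $\mathrm{pow}(w'_{i,S,k},\prod_{j\in S}x_j)$ that become ``not remaining'' once $x_{t+1}$ is fixed, then running the case distinction of \autoref{alg:lower_bound} — which variant of $b^\pm$ applies is decided by the sign pattern and parity of the remaining $w'_{i,S,k}$ (classically known) together with the sign of $g_{i,k}$ (computed in-register via its top qubit) — to load $LB(i,k,\cdot)$, then multiplying and accumulating $\sum_i \alpha_{i,k} g_{i,k} LB(i,k,\cdot)$ and comparing it against $c_k$. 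Taking the conjunction over $k$ yields two Boolean flags $b^+_{t+1}, b^-_{t+1}$ exactly as in \autoref{eq:boolean}. Stage two: conditioned on these flags apply the same four-way action as in \autoref{lemma:oracle} (an $R_y$ rotation on $\ket{0}_{t+1}$ when both successors survive, an $X$ when only the $1$-successor survives, nothing otherwise), so that only potentially satisfiable successors are generated and, when neither survives, the oracle is the identity. Stage three: update the $\widetilde{c}_k$ registers consistently with the now-fixed $x_{t+1}$ (adding in any product terms that just closed), and uncompute all the scratch $g_{i,k}$ and $LB$ registers so the ancillae are returned to $\ket{0}$.

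For the complexity accounting I would argue as follows. There are $\order{poly(n)}$ nonzero $w'$ entries spread over $\order{mZ}$ product terms; evaluating each $g_{i,k}$ update and each $LB$ value costs a bounded number of $\kappa$-bit integer multiplications, and by the sub-quadratic in-place multiplier cited earlier (\cite{kahanamokumeyer2024fastquantumintegermultiplication}) each multiplication has depth $\order{\kappa^{1+\epsilon}}$ — I would absorb this into the stated $\log^2_2(\kappa)$ by using instead a multiplier with depth $\order{\log^2_2(\kappa)}$ (a schoolbook/carry-lookahead layered multiplier), which is the natural reading of the lemma's bound; the $r'$-qubit controls on the product factors contribute the $\log_2(n)$ term exactly as in \autoref{lemma:oracle}. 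Multiplying the per-term cost by the $\order{poly(n)mZ}$ terms gives the claimed depth $\order{poly(n)mZ(\log_2(n)+\log^2_2(\kappa))}$, and the register count is $\order{n + poly(n)mZ\kappa}$.

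The main obstacle, and the step I would spend the most care on, is the correctness of the lower-bound sign/parity logic under a \emph{partial} assignment: one must verify that \autoref{alg:lower_bound} really returns a valid lower bound on $\alpha_{i,k}\prod_S \mathrm{pow}(w'_{i,S,k},\prod_{j\in S}x_j)$ over all completions — in particular that the ``except largest $w'_{i,S,k}<0$'' cases correctly handle the parity of the count of remaining negative factors so that the product of remaining factors has the sign that pushes the whole constraint toward violation — and that this bound composes correctly across the sum over $i$ with the signs $\alpha_{i,k}$. The subtlety is that ``remaining'' depends on which variables are assigned, not their values, so a factor with $w'>0$ can still be driven to $1$ (all its unassigned variables set so the product is $0$) while a factor can only contribute its full magnitude if forced; getting the monotonicity direction right in each of the six branches of \autoref{alg:lower_bound}, and checking the oracle is genuinely the identity when $\lnot b^+_{t+1}\land\lnot b^-_{t+1}$ (so no infeasibility is introduced beyond the unavoidable case flagged in \autoref{theorem:constraint}), is where the real work lies; the circuit-depth bookkeeping is then routine.
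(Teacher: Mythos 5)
Your proposal is correct and follows essentially the same route as the paper: maintain per-term registers for the partially evaluated products $g_{i,k}$, tentatively evaluate \autoref{eq:prod_lower_bound} for both successors via a compute--copy--uncompute pattern to obtain the two feasibility flags, apply the same four-way conditional assignment as in \autoref{lemma:oracle}, and uncompute the scratch registers, with matching depth and qubit accounting. Your substitution of a polylog-depth multiplier in place of the $\order{\kappa^{1+\epsilon}}$ one is exactly what the paper does (it cites a Sch\"onhage--Strassen-based quantum multiplication circuit with out-of-place ancillas), so no gap there.
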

\begin{proof}
	We assume that all integers are stored in sufficiently large quantum registers with at most $\kappa$ qubits.
	For any partial assignment, we also aim to store the fixed value of all the $Z$ terms of all the $m$ constraints individually in quantum registers, resulting in the quantum state
	\begin{align}
		\ket{\psi} = \ket{x_{1}}\dots\ket{x_{t}}\dots\ket{0}\ket{g_{1,1}(x_{par})}\dots{\ket{g_{Z,m}(x_{par})}}\ket{0}
	\end{align}
	Here, we assume that all $\alpha_{i,k}\neq 0$, since the value does not need to be stored.
	To check whether any successor of $x_{par}$ is satisfiable, we have to assign $x_{t+1}$ to 0 and one and evaluate all the constraints.
	For that we append $\ket{x'_{par}}\gets\ket{x_{par}}\ket{x_{t+1}}$.
	Using in the worst case $\order{poly(n)}$ controlled multiplications we get
	\begin{align}
    \begin{aligned}
		\ket{\psi'} &= U^{1,x_{t+1}}\ket{\psi} = \\
         &\ket{x_{1}}\dots\ket{x_{t}}\dots\ket{0}\ket{g_{1,1}(x'_{par})}\dots{\ket{g_{Z,m}(x'_{par})}}\ket{0},
    \end{aligned}
	\end{align}
	resulting in a circuit depth of $\order{poly(n)\left(\log_{2}(n) + \log^2_2(\kappa)\right)}$ \cite{Junhong2023QuantumCircuitDesignForIntegerMultiplicationBasedOnSchönhageStrassenAlgorithm}.
    As multiplication on a quantum computer is only performed out-of-place, this computation requires $\order{poly(n)nZ\kappa}$ ancilla qubits.
	Storing $x_{t+1}$ in the quantum register is not necessary yet, as it might not be feasible.
	Computing the lower bound as in \autoref{alg:lower_bound}, requires that for every term that contains any of the assigned variables, a check has to be made, if any of them are assigned to $0$.
	In the worst case this has to be done with a Toffoli on $n$ qubits for $\order{poly(n)}$ terms for every function for every constraint.
	Further the quantum implementation of \autoref{alg:lower_bound} requires the sign of the $ mz$ stored terms as quantum input. 
	To perform
	\begin{align}
	\begin{aligned}
		\ket{\psi^{LB}} &= U^{LB,x_{t+1}}\ket{\psi'} = \\
        &\ket{x_{1}}\dots\ket{x_{t}}\dots\ket{0}\ket{g_{1,1}(x'_{par})LB(1,1,x'_{par})}\otimes\\
        &\dots\otimes\ket{g_{Z,m}(x'_{par})LB(Z,m,x_{'par})}\ket{0}
	\end{aligned}
	\end{align}
	we therefore demand a circuit depth of $\order{poly(n)\left(\log_{2}(n) + \log^2_2(\kappa)\right)}$.
	Using a unitary $U^{Cmp,x_{t+1}}$ with a depth of 
	\begin{align}
		\order{m(Z+1)\log_{2}(\kappa) + \log_{2}(m)}
	\end{align}
	we evaluate the expression \autoref{eq:prod_lower_bound} for every constraint and store the boolean of the feasibility of all constraints in $b^{x_{t+1}}$.
	To determine the potential assignments $x_{t+1} = 0$ and $x_{t+1} = 1$, we compute $b^{0}$ and $b^{1}$ using:
	\begin{align}
    \begin{aligned}
		U = &\left(U^{Cmp,1}U^{LB,1}U^{1,1}\right)^{\dagger}\Lambda^{b_{1}}(X^{an})
        U^{Cmp,1}U^{LB,1}U^{1,1}\times\\
        &\left(U^{Cmp,0}U^{LB,0}\right)^{\dagger}\Lambda^{b_{0}}(X^{an})U^{Cmp,0}U^{LB,0}
    \end{aligned}
	\end{align}
	With the returned booleans, we define the assignment routine for $x_{t+1}$ in the same way as done in \autoref{lemma:oracle}.
	In the last step $b_{0}$ and $b_{1}$ have to be uncomputed.
\end{proof}

The cost of the classical computation of \autoref{alg:lower_bound} in \autoref{lemma:prod_constr} is not accounted for, as it only has to be done for the first state preparation circuit and not every time the oracle has to be called again. 

\begin{theorem}
\label{theorem:product_constraint} 
    Let $\kappa$, $n$ be integers and $w$ be a tensor with $\order{poly(n)}$ non-zero elements.
    There exists a quantum algorithm that iteratively generates at least all feasible solutions to \autoref{eq:product_constraint}, but at every iteration, it only produces infeasibilities if the two successors of a potentially satisfiable solution are not satisfiable.
    The algorithm runs in time  $\order{poly(n)mZ(\log_{2}(n) + \log^2_2(\kappa))}$ using $\order{n + poly(n)mZ\kappa}$ qubits.
\end{theorem}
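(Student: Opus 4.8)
The plan is to mirror the proof of \autoref{theorem:constraint}, replacing the sum-constraint oracle by the product-constraint oracle of \autoref{lemma:prod_constr} and carrying through the modified resource bounds. First I would fix $\kappa$ large enough to hold every integer that arises: besides the capacities $c_k$, this must accommodate the partial products $g_{i,k}(x_{par})$ and the lower-bound values returned by \autoref{alg:lower_bound}, so $\kappa = \order{\max_{i,k}\sum_{S:w'_{i,S,k}\neq 0}\log_2\abs{w'_{i,S,k}} + \max_k\log_2\abs{c_k}}$, which is still $poly(n)$ given that $w'$ has $\order{poly(n)}$ nonzero entries. I would then work in $\hil = \complex^{n}\otimes\bigotimes_{k=1}^{m}\bigotimes_{i=1}^{Z}\complex^{\kappa}\otimes\complex^{a}$ with $a = \order{poly(n)mZ\kappa}$ ancillas (needed because the multiplications inside \autoref{lemma:prod_constr} are performed out of place), initialized as
\begin{align}
	\ket{\psi_0} = \ket{0}^{n}\bigotimes_{k=1}^{m}\bigotimes_{i=1}^{Z}\ket{1}\ket{0},
\end{align}
since for the empty partial assignment every term is remaining and hence each $g_{i,k}$ is the empty product $1$.

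Next I would apply the oracles $U_1,\dots,U_n$ of \autoref{lemma:prod_constr} in order, obtaining $\ket{\psi} = U_n\cdots U_1\ket{\psi_0}$, and argue by induction on $t$ that the support of $U_t\cdots U_1\ket{\psi_0}$ contains every potentially satisfiable partial assignment of length $t$ in the sense of \autoref{def:product_constraint} (with respect to the fixed $LB$ of \autoref{alg:lower_bound}). The base case is immediate, the empty assignment being potentially satisfiable. For the inductive step, on every branch carrying a potentially satisfiable length-$t$ assignment with at least one potentially satisfiable successor, \autoref{lemma:prod_constr} produces exactly its potentially satisfiable length-$(t+1)$ successors; on a branch both of whose successors violate \autoref{eq:prod_lower_bound}, the same oracle leaves the branch frozen, which introduces an infeasibility but — crucially — deletes nothing. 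Hence no feasible solution is lost, and since a globally feasible assignment satisfies \autoref{eq:product_constraint}, and therefore \autoref{def:product_constraint} for every choice of $LB$, at each of its $n$ prefixes, it survives all $n$ oracle applications. Thus $\ket{\psi}$ contains at least all feasible solutions, and any infeasibility it contains was created exactly when a potentially satisfiable assignment had both successors infeasible, which is precisely the asserted property.

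Finally I would count resources: the state preparation is $n$ invocations of the oracle of \autoref{lemma:prod_constr}, each of depth $\order{poly(n)mZ(\log_2 n + \log_2^2\kappa)}$, so the total depth is the same expression after absorbing the extra factor $n$ into $poly(n)$; the qubit count does not grow, since the $\order{poly(n)mZ\kappa}$ working registers and ancillas are reused at each level, yielding $\order{n + poly(n)mZ\kappa}$. I expect the main obstacle to be the inductive bookkeeping rather than the counting — one must check that the frozen infeasible branches created at earlier levels are never acted on destructively by later $U_{t'}$ (the same phenomenon noted after \autoref{lemma:oracle}, where such states simply persist), and that the register widths chosen for the $g_{i,k}$ and lower-bound values are genuinely sufficient so that no silent overflow corrupts the comparisons defining $b^{0}$ and $b^{1}$ in \autoref{lemma:prod_constr}; handling the latter cleanly is what forces the explicit estimate of $\kappa$ above.
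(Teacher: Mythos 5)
Your proposal follows essentially the same route as the paper: initialize the $mZ$ subexpression registers to $\ket{1}$, apply the oracle of \autoref{lemma:prod_constr} once per variable, argue that potentially satisfiable assignments (and hence all feasible solutions) are never deleted, and multiply the per-oracle resource bounds by $n$ while reusing the $\order{poly(n)mZ\kappa}$ working and ancilla qubits. Your added explicit bound on $\kappa$ and the induction bookkeeping only make precise what the paper leaves implicit, so the argument matches.
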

\begin{proof}
	To apply the unitary from \autoref{lemma:prod_constr}, we need to define a Hilbert space
	\begin{align}
		\hil = \complex^{2^{n}} \otimes \bigotimes_{k=1}^{m} \left(\bigotimes_{i=1}^{Z+1} \complex^{2^{\kappa}}\right) \otimes \complex^{anc}
	\end{align}
	to store the integer assignments, the $m$ total constraint expressions, and the $mZ$ stored fixed subexpressions to compute the constraint, as well as an ancilla register for simplex computations.
    The ancilla registers have to be large enough to store the intermediate results of the out-of-place quantum multiplications, requiring $\order{mZpoly(n)}$ qubits.
	The initial state has to be set to
	\begin{align}
		\ket{\psi_{0}} = \ket{0}^{n} \otimes \bigotimes_{k=1}^{m} \left(\bigotimes_{i=1}^{Z} \ket{1}\ket{0}\right) \otimes \ket{0}.
	\end{align}
	The registers for the subexpressions have to be assigned to $1$ to evaluate the correct products for any assignment.
	Applying the algorithm described in \autoref{lemma:prod_constr} iteratively for every variable of the optimization problem, will at every state generate only potentially satisfiable solutions based on \autoref{def:product_constraint} to a partial solution, if at least one of its successors is potentially satisfiable.
\end{proof}

It is critical to remember that the algorithm can generally not provide a quantum state only containing truly satisfying assignments, as the definition of potential satisfiability of partial assignments is weaker than the satisfiability of whole assignments.
The routines defined in \autoref{theorem:general_objective}, \ref{theorem:constraint}, and \ref{theorem:product_constraint} combined describe the quantum routine that tailors Grover's algorithm to a wide range of optimization problems, given efficient representations in the form of \autoref{eq:MostGeneralConstraint}.
Even though the described methods don't capture every potential operation for constraint functions, even when tackling problems like $1/g_1(x) + 1/g_2(x)\leq c$, the same methodology applies.
Further, as done in \autoref{theorem:general_objective}, we believe that there has to be an efficient state-preparation computing only potentially satisfiable assignments for any possible efficiently computable constraint function, given some definition of potential satisfiability.

\begin{conjecture}
\label{conjecture:general_problems}
	For every problem in NPO \cite{Leeuwen2012TheoryOfComputingSystems}, there exists an iterative quantum algorithm that generates only potentially feasible solutions, if any successor of a partial solution is potentially satisfiable based on a non-trivial constraint-check.
\end{conjecture}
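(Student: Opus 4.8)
The plan is to reduce the conjecture to the construction of a single ingredient: a \emph{sound, polynomial-time, non-trivial partial-feasibility predicate}, and then reuse the iterative scheme already developed in the paper. Concretely, write $\phi:\{0,1\}^n\to\{0,1\}$ for the feasibility predicate of the NPO problem, which by definition of NPO is computable in polynomial time. I would call a family $\{\mathrm{PS}_t\}_{0\le t\le n}$ of predicates $\mathrm{PS}_t:\{0,1\}^t\to\{0,1\}$ \emph{sound} if $\phi(x_1,\dots,x_n)=1$ forces $\mathrm{PS}_t(x_1,\dots,x_t)=1$ for every prefix length $t$, and \emph{non-trivial} if $\mathrm{PS}_t$ is not identically $1$. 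Given such a family, the remaining construction is exactly that of \autoref{theorem:constraint}: at level $t$ one evaluates $b^0=\mathrm{PS}_{t+1}(x_1,\dots,x_t,0)$ and $b^1=\mathrm{PS}_{t+1}(x_1,\dots,x_t,1)$ into ancillas, applies the four-case assignment unitary (both true $\Rightarrow$ $R_y$ on $\ket{0}_{t+1}$; only $b^1$ true $\Rightarrow$ $X$; only $b^0$ true $\Rightarrow$ nothing; neither $\Rightarrow$ nothing), and uncomputes $b^0,b^1$. Soundness guarantees no prefix of a genuinely feasible assignment is ever pruned, so iterating over $t=0,\dots,n-1$ yields a state containing every feasible solution, with the loop invariant that at level $t$ only potentially-satisfiable prefixes (plus the unavoidable frozen zero-padded dead ends) appear. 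By \cite{Bernstein1997QuantumComplexityTheory} the polynomial-time classical evaluation of $\mathrm{PS}_t$ lifts to a polynomial-depth reversible quantum circuit, so the whole routine is efficient in both depth and qubit count.

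It remains to exhibit a concrete sound non-trivial $\mathrm{PS}$ for an \emph{arbitrary} NPO instance, i.e.\ without the structured form of \autoref{eq:MostGeneralConstraint}. The construction I would use always works: compile $\phi$ into a polynomial-size Boolean circuit and apply the Tseitin transformation, obtaining a CNF formula $F$ over $x_1,\dots,x_n$ together with polynomially many auxiliary gate variables, such that for every fixed $x$ there is a unique consistent extension to the auxiliaries and $F$ is satisfied by it iff $\phi(x)=1$. Define $\mathrm{PS}_t(x_1,\dots,x_t)=1$ iff running unit propagation on $F$ with $x_1,\dots,x_t$ fixed does not derive the empty clause. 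This is polynomial time; it is sound because unit propagation only asserts literals logically entailed by the current partial assignment, so if any completion satisfies $\phi$ the propagation closure is conflict-free; and it is non-trivial because a conflict is flagged as soon as the fixed prefix falsifies any clause of $F$. Any stronger still-efficient relaxation (LP or convex relaxations, bound propagation, or the explicit bounds of \autoref{alg:lower_bound}) may be substituted; the unit-propagation version merely certifies that \emph{some} admissible check always exists.

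The main obstacle is not existence but \emph{quality}. The literal statement of the conjecture is essentially discharged by the paragraph above, since "non-trivial constraint-check" is a very weak requirement; what makes the framework useful is that the relaxation be tight enough that the amplitudes of feasible assignments in the prepared state do not decay super-polynomially, and this cannot hold for the generic unit-propagation check on every NPO problem (else one could extract membership information too cheaply). So the honest reading of the conjecture — and the step I expect to resist a clean proof — is the quantitative one: that for every NPO problem there is an efficiently computable partial-feasibility check whose pruning is strong enough to keep the subsequent amplitude amplification competitive, which is really the meta-optimization question about good problem-specific relaxations flagged in the discussion of \autoref{alg:lower_bound}. A secondary subtlety for a full proof is the phrasing "generates only potentially feasible solutions": because a doubly-infeasible prefix is frozen rather than deleted, the statement must be read as the per-iteration loop invariant on prefixes, with zero-padded frozen branches accepted as residue, exactly as in the remark following \autoref{lemma:oracle}.
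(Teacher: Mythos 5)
The paper offers no proof of this statement: it is deliberately left as a conjecture, and the Discussion explicitly lists a rigorous proof of \autoref{conjecture:general_problems} as future work, so there is no paper argument to compare yours against. Judged on its own terms, your proposal is a reasonable discharge of the \emph{literal} reading: you formalize ``non-trivial constraint-check'' as a sound, efficiently computable, not-identically-true prefix predicate, construct one generically for any NPO feasibility predicate (Tseitin encoding of the poly-size feasibility circuit plus unit propagation), and then reuse the paper's own iterative four-case assignment oracle from \autoref{lemma:oracle}/\autoref{theorem:constraint} together with the reversible-simulation argument of \cite{Bernstein1997QuantumComplexityTheory}, with the frozen dead-end branches handled exactly as in the remark following \autoref{lemma:oracle}. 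The soundness claim for unit propagation is correct, and the per-iteration invariant you state matches how the paper phrases its own theorems, so under your formalization the argument goes through; the only semantic quibble is that ``non-trivial'' becomes instance-dependent (for an instance in which every assignment is feasible no sound check can ever prune, and for some circuits unit propagation may only bite once all inputs are fixed), so your predicate is only guaranteed to coincide with full feasibility checking at the leaves rather than to prune at intermediate depths.

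Your closing paragraph is, in effect, the correct verdict on your own proof: what you establish is a near-vacuous reading of the statement, whereas the content the author plainly intends --- that the prefix check does enough pruning for the subsequent amplitude amplification to remain competitive, i.e.\ the quantitative analogue of \autoref{theorem:constraint} and \autoref{theorem:product_constraint} for arbitrary NPO problems, with the tightness/cost trade-off already flagged around \autoref{alg:lower_bound} --- is precisely what you do not (and argue one should not expect to) prove generically. So you should present this not as a proof of the conjecture but as an observation that the conjecture as written is too weak to capture its intended meaning, together with a proposed strengthening; that framing is consistent with the paper's decision to leave the statement open.
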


\begin{algorithm}[t]
\caption{\textbf{QSearchViolation}($\mathcal{G}$, $F$, $M$)}\label{alg:QSearch2}
Set $l = 0$, $m_{tot} = 0$, and let $1 < d < 2$ be constant\;
Increase $l$ by $1$ and set $m = \lceil d^{l}\rceil$\;
Apply $\mathcal{G}$ to the initial state $\ket{0} \ket{c_1}\dots\ket{c_m} \ket{0}$\;
Choose $j \in [1, m]$ uniformly at random
$m_{tot}\gets m_{tot} + 2 j + 1$\;
Apply $\left(\mathcal{G}S_0\mathcal{G}^\dagger S_F\right)^{j}$ to $\ket{s} = \mathcal{G} \ket{0} \ket{c_1}\dots\ket{c_m} \ket{0}$\;
Measure all registers $\rightarrow\ket{x}
\ket{\tilde{c}_1}\dots\ket{\tilde{c}_m}\ket{P_{x}}$\;
$V\gets \sum_{k=1:\tilde{c}_k<0}^m \tilde{c}_k$\;
\eIf{$F(V) = 1$ \textbf{or} $m_{tot} \geq M$}{
    \textbf{return} $x$, $V$\;
}{
    go to step 2\;
}
\end{algorithm}

The \autoref{conjecture:general_problems} defines a framework for a general-purpose Grover-based quantum heuristic that enhances exhaustive search by performing rudimentary intermediate constraint checks during the initial-state preparation routine to reduce branching.
This framework may be further extended by increasing the computational effort of the state preparation, but this is beyond the scope of this paper.

\subsection{Additional quantum search stage}
\autoref{alg:QSearch} and \ref{alg:QMaxSearch} describe an algorithm that always tries to improve the objective value of the current best-found solution, while always rejecting non-feasible solutions. 
This method is very successful when dealing with low-complex constraints.
For very complex constraints, this can be problematic, as a feasible solution may not be found within a reasonable time, given that no viable solution was initially provided to the algorithm.
This can be prevented if the QMaxSearch is separated into two stages.
We defining the \textit{remaining cost} as 
\begin{align}
    \tilde{c}_k = c_k + P_k - g_k(x),
\end{align}
where $g_k(x)$ is the general constrant of \autoref{eq:MostGeneralConstraint}.
If the initial solution is not feasible, compute the total constraint violation, for example
\begin{align}
    V = \sum_{j = 1:\tilde{c}_j < 0}^m \tilde{c}_k,
\end{align}

\begin{algorithm}[t]
\caption{\textbf{QMaxSearchTwoStage}($\mathcal{G}$, $M$, $F$, $c$)}\label{alg:QMaxSearchTwoStage}
set $T=\sum_{k=1:\tilde{c}_k<0}^m \tilde{c}_k$, $x=0$\;
stage $ = 1$\;
\If{x is feasible}{
    stage $=2$\;
    $T = 0$\;
}
\While{True}{
    \eIf{stage = 1}{
        $x$, $V$ = \textbf{QSearchViolation}($\mathcal{G}$, $T$, $M$)\;
        \If{$V > 0$}{
            stage $= 2$\;
            $T = P_x$, $V = -\infty$\;
        }
    }{
        $x$, $V$ = \textbf{QSearch}($\mathcal{G}$, $F(x)=x>T$, $M$)\;
    }
    \eIf{$V > T$}{
        set $T = V$ and $x_{out} = x$\;
    }{
        \textbf{return} $x_{out}$, $T$\;
    }
}
\end{algorithm}

and Grover over the violation until $V$ becomes positive.
Afterwards, the original search is executed.
This QSearch variant is described by \autoref{alg:QSearch2}.
The full two-stage quantum search is presented in \autoref{alg:QMaxSearchTwoStage}.
In principle, a three-stage search could be implemented: after a feasible solution is found (i.e., all $\tilde{c}_k$ are positive), the Grover search could be used to tighten the constraints by searching for assignments that minimise the total remaining capacity, ensuring all values remain positive.

\subsection{Benchmarking\label{sec:benchmarking}}
We aim to show the potential performance of the state preparation routine. 
As done in previous work, we use our developed quantum circuit as an initial state for Grover's maximum-finding described by \autoref{alg:QSearch} and \ref{alg:QMaxSearch}.
The first goal is to demonstrate the improvement of quantum search, specifically reducing the number of Grover iterations required to find the optimal solution compared to the standard Grover algorithm.
Secondly, we also aim to compare the performance of our quantum algorithm to classical state-of-the-art optimization solvers.
The classical solvers are Gurobi, Hexaly and Simanneal, a simulated annealing solver. 
While Gurobi and Hexaly are exact solvers, we can log the time stamps of any incumbent solution found without performance guarantees, treating those as heuristic solutions. 
We compare the time-to-solution of the quantum and classical algorithms for every incumbent solution.
For benchmarking purposes, we define an optimization problem with a quadratic objective function and two quadratic constraints
\begin{align}
\begin{aligned}
\label{eq:benchmarking_problem}
	\max &\sum_{i=1}^{n}\sum_{j=i}^{n} p_{i,j}x_{i}x_{j}\\
	s.t. 	&\sum_{i=1}^{n}\sum_{j=i}^{n} w_{i,j,1}x_{i}x_{j}\leq c_{1}\\
		&\sum_{i=1}^{n}\sum_{j=i}^{n} w_{i,j,2}x_{i}x_{j}\geq c_{2}\\
		& x_{j} \in \{0,1\} \text{ } 1\leq j \leq n.
\end{aligned}
\end{align}

Our goal is not to provide an exhaustive investigation, but rather to demonstrate the potential of our quantum algorithm.
For this, we investigate instances of sizes $n=10-3000$, with $10$ randomly generated instances per instance size.
Importantly, every instance using the classical state-of-the-art solvers has proven to be satisfiable, as we developed a heuristic incapable of establishing the unsatisfiability of any instance.
The classical solvers are given a time limit of $9$ minutes for every instance. 
The quantum algorithm is given a maximum number of oracle calls of $(n / 4)^{2} + 1200$ and a rotation angles $\theta_0 = 2\arccos \sqrt{\frac{n+4}{n+8}}$ and $\theta_1 = 2\arccos \sqrt{\frac{4}{n+8}}$.
The $R_y$ gate is used to bias bit-wise towards a given assignment. 
$\theta_{x_j}$ is therefore used for the bit assignment $x_j$.
To accurately estimate the performance of the quantum algorithm, we build upon the benchmarking methods depicted in \cite{wilkening2024quantumalgorithmsolving01, ODSQ1}.
We translate the methods from \autoref{theorem:general_objective} and \ref{theorem:constraint} into a classical routine, where a classical probabilistic decision replaces the $R_y(\theta_{x_j})$ operation.
In \cite{ODSQ1}, it was demonstrated that this sampling-based benchmarking approach yields results very close to those obtained from exact benchmarking setups on small instances.
By using this routine within a sampling algorithm, we can determine the number of Grover iterations required to find various solutions to the optimization problem.
When benchmarking our algorithm, we can log some specific data to compute the expected time for NQS to solve the problem.
\begin{figure}[t!]
    \centering
	\includegraphics[width=\linewidth]{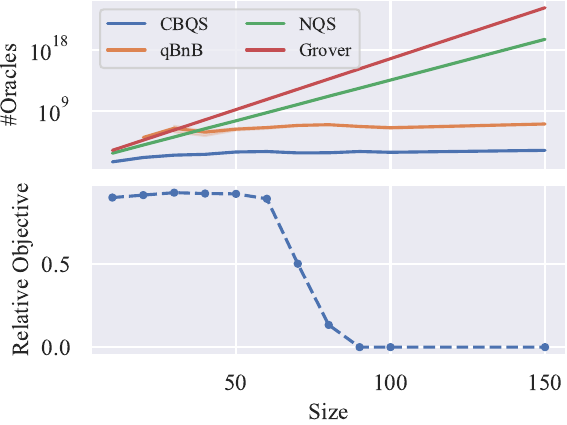}
	\caption{
    Comparison of the number of oracle calls required by Grover's algorithm, qBnB and CBQS (above plot), and the relative objective value of Simanneal compared to CBQS. 
    Simulated annealing is limited to using 1$0^5$ steps, leading to a vanishing success probability in finding feasible solutions from 90 variables.
    }
   	\label{fig:iqs-vs-grover}
\end{figure}
\begin{figure*}[t!]
    \includegraphics[width=0.9\linewidth]{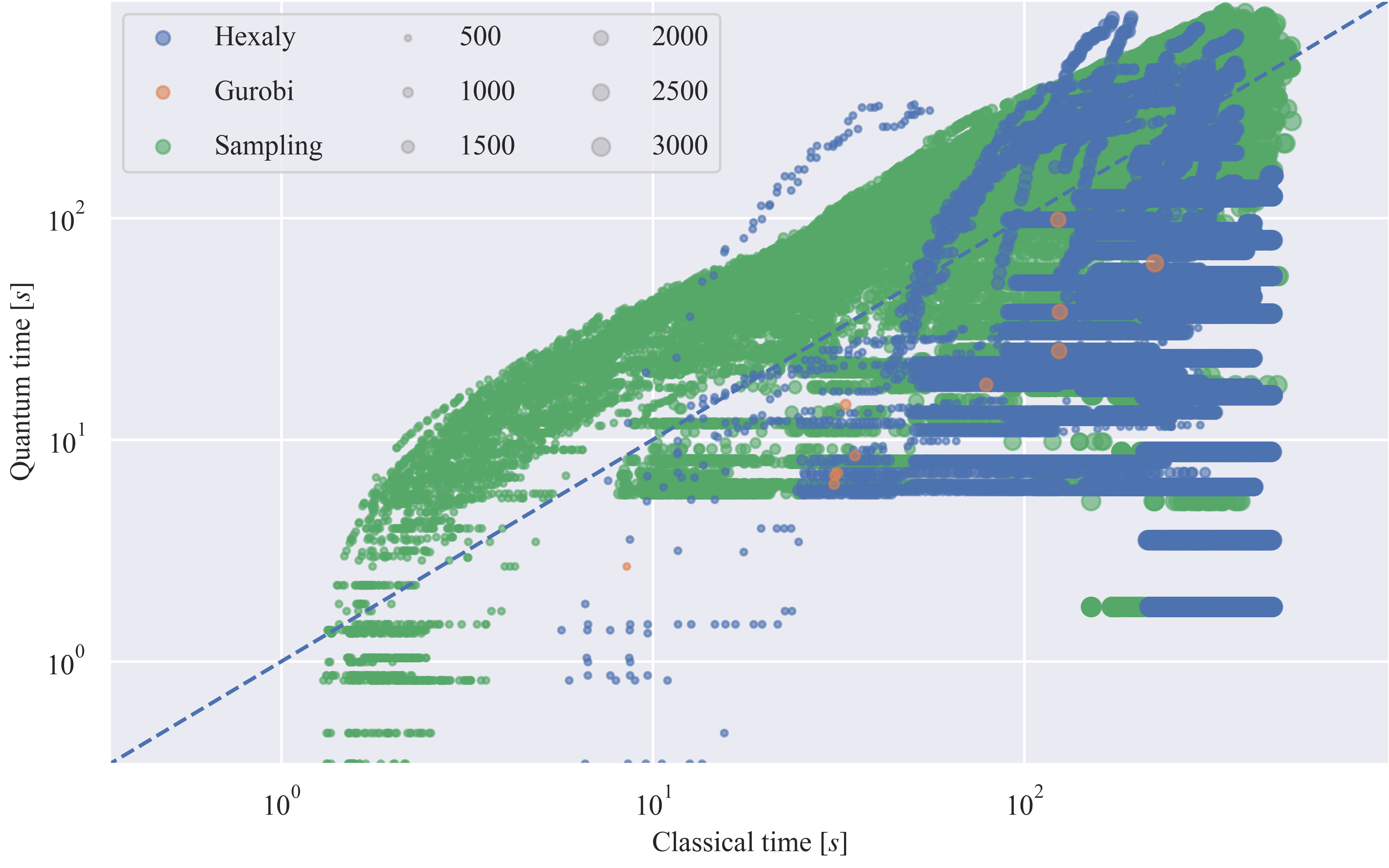}
    \caption{
	Comparison of the \textit{time-to-solution} of CBQS versus the classical state-of-the-art solvers Gurobi and Hexaly.
	The quantum gates are assumed to be executed within $6.5ns$.
	Our quantum algorithm potentially outperforms the classical solvers at finding various incumbent solutions.
    }
    \label{fig:comparison}
\end{figure*}

Using this classical sampling routine, we can determine the number of oracle calls required to find an incumbent solution.
To determine the algorithm's actual runtime, we further need to compute the exact number of quantum operations applied.
Here, we count the number of quantum cycles/layer.
At every iteration, the quantum algorithm has to apply a unitary computing and uncomputing $Q^+$ and $Q^-$ from \autoref{eq:Q_plus} and \autoref{eq:Q_minus}.
If variable $t$ is targeted, this includes every term of the matrix with indices $j$, resulting in $n$ terms. 
We assume that for all constraints, we can do the computation of the $t$ terms in parallel. 
In consecutive applications of the QFT adder, the QFTs and inverse QFTs cancel each other out.

Using the improvement from \cite{wilkening2024quantumalgorithmsolving01}, this step results in
\begin{align}
    C_A(n) = n (2 \lceil \log_2 \kappa \rceil + 3).
\end{align}
cycles.
Afterwards, a comparison with the remaining capacity is performed, as described by \autoref{eq:boolean}, and a conditional subtraction of the respective $Q^+$ and $Q^-$.
Representing the integer values in the two's complement, this can be done by subtraction and a single CNOT gate.
As this check is uncontrolled, for a large number of integer qubits $\kappa$, a different adder implementation is superior.
This is the implementation of a quantum carry-lookahead adder \cite{draper2004logarithmicdepthquantumcarrylookaheadadder} with a depth of
\begin{align}
\begin{aligned}
    C_{A2} = \lfloor\log_2 \kappa \rfloor + &\lfloor\log_2 (\kappa - 1) \rfloor + \lfloor\log_2 \frac{\kappa}{3} \rfloor + \\
    &\lfloor\log_2 \frac{\kappa - 1}{3}\rfloor + 14.
\end{aligned}
\end{align}
The full constraint check, then adds up to a total cycle count of
\begin{align}
    2n(C_{A2} + 4\kappa + 1) + 2n C_A(n).
\end{align}
The quadratic objective value is evaluated similar to $Q^+$ and $Q^-$ by consecutive QFT adders, adding a total depth of $C_A(\frac{n(n+1)}{2})$.

To compare our quantum search with qBnB, we focus solely on the number of queries the algorithm makes, favouring qBnB. 
This is a generous comparison for qBnB, as it must employ costly quantum routines to compute inner and outer bounds while exploring the tree. 
Modern classical Branch-and-Bound solvers, such as Gurobi and Hexaly, utilise various heuristics, including the simplex algorithm \cite{huang2021branchboundmixedinteger}, to compute these bounds, which are essential for translating into a quantum routine for a general-purpose qBnB solver.
Although this is possible due to \cite{nannicini2022fastquantumsubroutinessimplex}, recent work suggests that this approach is practically unusable \cite{ammann2023realisticruntimeanalysisquantum}.
Nevertheless, for the mentioned comparison of the quantum solvers, we execute Gurobi, extracting the total number of nodes $T$ explored by the search strategy, resulting in the number of oracle calls for qBnB given by $\order{\sqrt{nT}n\log_{2}^{2}(n\log_{2}(T))}$ \cite{Ambainis_2017}. 
Here, we ignore the accuracy factor.

Since we can only compute the theoretical runtime of our quantum algorithm, we must make certain assumptions about the quantum hardware. 
As we want to demonstrate the potential for a far-term quantum advantage, we make some optimistic but not unrealistic assumptions.
We assume our quantum hardware can execute a single controlled $R_{y}$ gate with an arbitrary angle $\theta$, the Toffoli gate, and any single-qubit logic gate.
Further, we assume that all disjoint quantum gates can be executed in parallel, culminating in a single \textit{quantum cycle}, which we consider to be executable in a time of $6.5ns$ \cite{Chew2022UltrafastEnergyExchangeBetweenTwoSingleRydbergAtomsOnANanosecondTimescale}.
All the experiments were performed on a MacBook Pro with an (6 (12 Threads) x 2.6-4.1 GHz) Intel Core i7. 
The code, experiments, and data can be found in \cite{Software_code, Software}. 

\section{Results}

Initial benchmarking was conducted on minor problem instances (below 150 variables). As illustrated in \autoref{fig:iqs-vs-grover}, basic heuristics such as simulated annealing (SA) struggle to find even a feasible solution when the instance size exceeds 90 variables. Moreover, SA exhibits longer runtimes compared to our proposed quantum routine. Due to its poor performance and inability to produce feasible solutions for larger instances, SA was excluded from those benchmarks.

\autoref{fig:iqs-vs-grover} also compares the number of oracle calls required by various quantum routines. Notably, our quantum search framework demonstrates a clear improvement over Grover’s algorithm and NQS. Since the oracles involved are of comparable cost, the observed enhancement in oracle efficiency translates directly into a corresponding reduction in quantum gate usage. Furthermore, our method consistently requires fewer oracle calls than the qBnB method. 
For qBnB, the number of nodes explored by Gurobi was used as a proxy, as equivalent data were not available from Hexaly. 
Even after disregarding the computationally intensive quantum simplex subroutine, our method consistently outperforms qBnB in terms of oracle efficiency as the problem size increases. 
Based on these numerical results, our approach appears to be the most promising (at least far-term) quantum algorithm for addressing the considered optimization problem, potentially also for general optimization problems.

After empirically demonstrating that our algorithm performs best among fault-tolerant quantum algorithms, we compare its \textit{time-to-solution} with state-of-the-art classical solvers. 
Firstly, we observe in \autoref{fig:comparison} that Hexaly seems to outperform Gurobi on many instances. 
Yet, assuming we have access to a powerful quantum computer, we observe that our proposed quantum algorithm can potentially outperform the classical solvers. 
For many data points, it also appears possible to achieve quantum advantage even with slower quantum gates and error-correction schemes, given the large runtime gap of factors of up to $10^3$.
As the quantum algorithm is benchmarked using a specialized sampling routine, we can also log its time-to-solution. 
As expected, for many, quickly finding assignments (via a few Grover iterations) makes the sampling algorithm faster, as it requires drawing only a small number of samples. 
But for most data points, the quantum algorithm will outperform its classical equivalent.
Nevertheless, the results suggest that the quantum-inspired classical sampling algorithm may be a promising candidate for classical optimization solvers.
This might be even more emphasized by the investigation in \autoref{fig:obj-over-time}.
Here, all the algorithms are executed on a 1000-variable instance with a time limit of 25000 seconds ($\approx$7 hours). 
For the quantum algorithm, the sampling routine had to run longer.
The plot demonstrates that our classical sampling and its corresponding quantum algorithm are much more performant on the investigated problem, as they don't struggle to find increasingly better incumbent solutions.
Here, it is not known how close the best-found solution (CBQS) is to the optimal solution, as the classical algorithms couldn't achieve an optimality gap better than $50\%$.

\begin{figure*}[t]
	\centering
	\includegraphics[width=\textwidth]{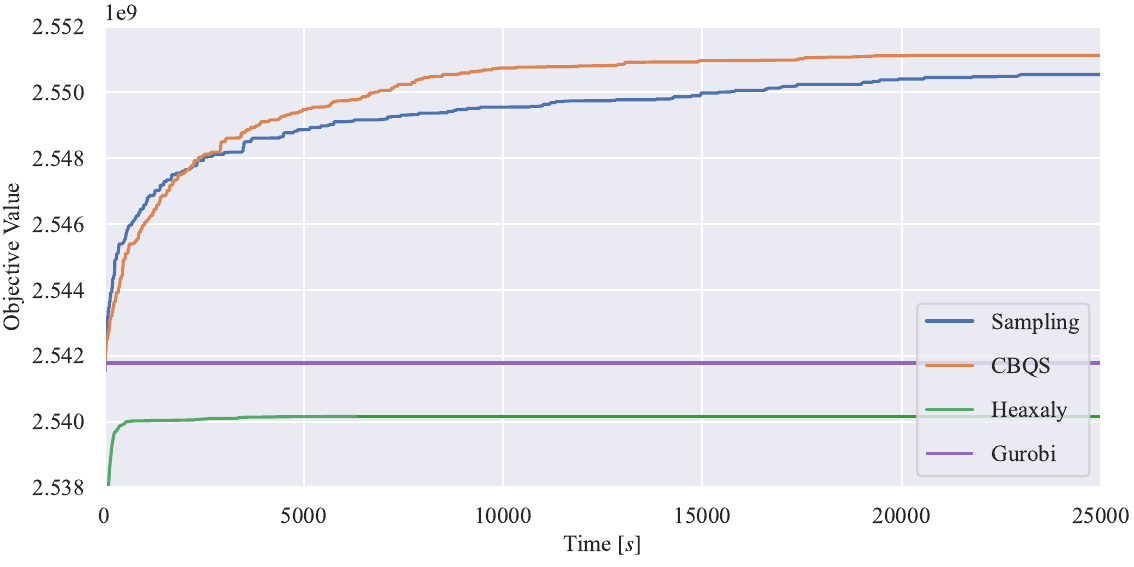}
	\caption{Best found incumbent solution over time up to $\approx$7h for one 1000 variable instance. The comparison was made between Hexaly, Gurobi, CBQS, and its classical sampling equivalent.
	The quantum algorithm and its classical counterpart can quickly outperform Hexaly and Gurobi, which struggle to improve upon the incumbent solution by much.
	}
    \label{fig:obj-over-time}
\end{figure*}

\section{Discussion}
In this paper, we present a Grover-based quantum heuristic for general optimization problems and demonstrate its potential superiority over classical state-of-the-art methods.
Here, this was demonstrated on a specific problem with a large benchmark set, yet we fail to provide predictions for various optimization problems.
Still, a more rigorous benchmark analysis or theoretical development is required to affirm our findings.
Nevertheless, the success of this and the previous work leaves us optimistic about the capabilities of our routine.
Furthermore, our algorithm, as it currently stands, is a relatively primitive implementation.
We believe numerous improvements can be made to this framework, thereby enhancing the algorithms' overall performance and efficiency.
Those improvements may also be unavoidable when more realistic hardware assumptions must be considered, which increases circuit depth to support error correction and higher gate times.
Furthermore, as the presented quantum algorithm is a form of local search, extending it with ideas from the general metaheuristic approach, tabu search \cite{glover1990tabu}, might further enhance its performance.

An interesting future work would also be to provide a rigorous proof for \autoref{conjecture:general_problems} and a general algorithmic description for this class of quantum algorithms, beyond the provided formulations.

\section*{Acknowledgments}

We thank René Schwonneck, Lennart Binksowski, Timo Ziegler, Maximilian Hess, and Tobias J. Osborne for constructive discussions.
This project was enabled by the DFG through SFB 1227(DQ-mat), QuantumFrontiers, the QuantumValley Lower Saxony, the BMBF projects ATIQ and QuBRA, the BMWK project ProvideQ, and the Quantera project ResourceQ.

\bibliographystyle{unsrt}
\bibliography{sample}

\end{document}